\tikzstyle{defnode}=[circle,fill=white,draw=black,inner sep=0pt,minimum size=5pt]
\tikzstyle{ionode}=[circle,fill=black,draw=black,inner sep=0pt,minimum size=5pt]
\tikzstyle{dedge}=[double,double distance=1.5pt]
\tikzstyle{tedge}=[double,double distance=2pt]
\newtheorem{theorem}{Theorem}
\newtheorem{lemma}{Lemma}
\newtheorem{corollary}{Corollary}
\newcommand{\fotwocolor}{FO-2-Color}
\title{The Hardness of the Functional Orientation 2-Color Problem}
{\author{\renewcommand{\thefootnote}{\arabic{footnote}} Søren Bøg\footnotemark[1]  ~~~~~~~~ Morten Stöckel\footnotemark[2]  ~~~~~~~~ Hjalte Wedel Vildhøj\footnotemark[1]\\[1em]
\renewcommand{\thefootnote}{\arabic{footnote}}\footnotemark[1]~~\small Technical University of Denmark, DTU Compute, \texttt{\{sbog@dtu.dk,hwv@hwv.dk\} }\\
\renewcommand{\thefootnote}{\arabic{footnote}}\footnotemark[2]~~\small IT University of Copenhagen, \texttt{mstc@itu.dk} } }
\date{}
\begin{document}
\renewcommand{\thefootnote}{\arabic{footnote}}
\maketitle
\ifdraft{\begin{center}\huge \textit{DRAFT}\\Compiled \today\end{center}}{}

\begin{abstract}
We consider the Functional Orientation 2-Color problem, which was introduced by Valiant in his seminal paper on holographic algorithms~[SIAM J. Comput., 37(5), 2008]. For this decision problem, Valiant gave a polynomial time holographic algorithm for planar graphs of maximum degree 3, and showed that the problem is \textbf{NP}-complete for planar graphs of maximum degree 10. A recent result on defective graph coloring by Corrêa~et~al. [Australas.~J.~Combin., 43, 2009] implies that the problem is already hard for planar graphs of maximum degree 8. Together, these results leave open the hardness question for graphs of maximum degree between 4 and 7.

We close this gap by showing that the answer is always yes for arbitrary graphs of maximum degree 5, and that the problem is \textbf{NP}-complete for planar graphs of maximum degree 6. Moreover, for graphs of maximum degree 5, we note that a linear time algorithm for finding a solution exists.
\end{abstract}

\section{Introduction}
In this paper we assume $G=(V,E)$ is an undirected multigraph without loops. The maximum degree of $G$ is denoted $\Delta$. An \emph{isolated vertex} is a vertex with degree zero. A \emph{functional orientation} of $G$ is an assignment of directions to a set of edges such that every non-isolated vertex has exactly one edge directed away from it. A single edge may be assigned two opposite directions, or it may remain undirected. A \emph{full functional orientation} is a functional orientation of $G$ that leaves no edges of $G$ undirected. A \emph{$k$-coloring} of $G$ is a partition of $V$ into $k$ sets $V_1,V_2,\ldots, V_k$. If the subgraphs induced by $V_1,V_2,\ldots,V_k$ contain no edges, we say that the coloring is a \emph{proper $k$-coloring}. For a given $k$-coloring, an \emph{induced monochromatic component} is a connected component of the subgraph induced by $V_i$ for some $i=1,\ldots,k$.

Functional orientations occur in many applications. They naturally capture deterministic transitional systems such as finite state machines and positional strategies in, e.g., stochastic games and Markov decision processes~\cite{shapley1953stochastic, andersson2009complexity}. Functional orientations also prove useful in the analysis of algorithms, for example, in a dictionary with cuckoo hashing a new element can be inserted without rehashing if and only if the resulting cuckoo graph has a full functional orientation~\cite{pagh2004cuckoo}. The more general concept of \emph{$c$-orientations}, where the edges are oriented such that every vertex has out-degree at most $c$, has previously been studied in connection with dynamic representations of sparse graphs~\cite{brodal1999dynamic}.

The \emph{Functional Orientation 2-Color problem} (\fotwocolor\ problem) is to determine if $G$ has a (not necessarily proper) 2-coloring of the vertices and a functional orientation, such that every edge between two vertices of the same color is directed in at least one direction by the functional orientation. Equivalently, a graph is \fotwocolor able if and only if there is a 2-coloring of $G$ such that every induced monochromatic component has a full functional orientation. See \autoref{fig:example} for an example.

\begin{figure}
\centering
\hfill
\subfigure[]{
\begin{tikzpicture}[scale=0.9,every node/.style=defnode]
\begin{scope}[decoration={
markings,
mark=at position 0.5 with {\arrow[black,line width=1]{stealth};}},
every edge/.append style={postaction={decorate}}]
\draw (-30:1) node(1)[fill=black] {};
\draw (90:1) node(2)[fill=black] {};
\draw (210:1) node(3) {};
\draw (-30:2) node(4) {};
\draw (90:2) node(5) {};
\draw (210:2) node(6)[fill=black] {};
\draw (1) edge [bend left=10] (2);
\draw (2) edge [bend left=10] (1);
\draw (2) to [bend left=10] (3);
\draw (2) to [bend right=10] (3);
\draw (3) to [bend left=10] (1);
\draw (3) to [bend right=10] (1);
\draw (4) edge [bend left=10] (5);
\draw (5) edge [bend left=10] (4);
\draw (5) to [bend left=10] (6);
\draw (5) to [bend right=10] (6);
\draw (6) to [bend left=10] (4);
\draw (6) to [bend right=10] (4);
\draw (1) to (4);
\draw (2) to (5);
\end{scope}

\draw (3) edge[decoration={
markings,
mark=at position 0.65 with {\arrow[black,line width=1]{stealth};},
mark=at position 0.35 with {\arrowreversed[black,line width=1]{stealth};}},postaction={decorate}] (6);
\end{tikzpicture}
\label{fig:introvalidexample}
}
\hfill
\subfigure[]{
\begin{tikzpicture}[scale=0.9,every node/.style=defnode]
\draw (-30:1) node(1)[fill=black] {};
\draw (90:1) node(2)[fill=black] {};
\draw (210:1) node(3)[fill=black] {};
\draw (-30:2) node(4) {};
\draw (90:2) node(5) {};
\draw (210:2) node(6) {};
\draw (1) to [bend left=10] (2);
\draw (1) to [bend right=10] (2);
\draw (2) to [bend left=10] (3);
\draw (2) to [bend right=10] (3);
\draw (3) to [bend left=10] (1);
\draw (3) to [bend right=10] (1);
\draw (4) edge [bend left=10] (5);
\draw (4) edge [bend right=10] (5);
\draw (5) to [bend left=10] (6);
\draw (5) to [bend right=10] (6);
\draw (6) to [bend left=10] (4);
\draw (6) to [bend right=10] (4);
\draw (1) to (4);
\draw (2) to (5);
\draw (3) to (6);
\end{tikzpicture}\label{fig:introinvalidexample}
}
\hfill
\subfigure[]{
\begin{tikzpicture}[scale=0.9,every node/.style=defnode]
\draw (-30:2) node(1) {};
\draw (90:2) node(2) {};
\draw (210:2) node(3) {};
\draw (1) to [bend left=10] (2);
\draw (1) to (2);
\draw (1) to [bend right=10] (2);
\draw (2) to [bend left=10] (3);
\draw (2) to (3);
\draw (2) to [bend right=10] (3);
\draw (3) to [bend left=10] (1);
\draw (3) to (1);
\draw (3) to [bend right=10] (1);
\end{tikzpicture}\label{fig:intronosolution}
}
\hfill
\caption{Two different 2-colorings of a graph with maximum degree 5 are shown in \subref{fig:introvalidexample} and \subref{fig:introinvalidexample}. The coloring in \subref{fig:introvalidexample} admits a full functional orientation, such as the one shown, on the induced subgraphs.  On the other hand, the subgraphs induced by the coloring in \subref{fig:introinvalidexample} have no full functional orientation. The graph in \subref{fig:intronosolution} is not \fotwocolor able.\label{fig:example}}
\end{figure}

The \fotwocolor\ problem was one of the problems considered by Valiant in his seminal paper on holographic algorithms~\cite{ValiantHolographic2008}. Valiant gave a polynomial time holographic algorithm for determining if a planar graph with $\Delta \leq 3$ is \fotwocolor able. Essentially, the algorithm counts the number of possible \fotwocolor ings by transforming the problem to that of counting perfect matchings. However, each \fotwocolor ing may be counted multiple times, thus we are unable to exactly count the number of \fotwocolor ings. Even so, the input graph is \fotwocolor able if and only if the sum is nonzero. Assuming that the input graph is planar, this sum can be computed in polynomial time using the FKT algorithm~\cite{Kasteleyn1961, Temperley1961, kasteleyn1963, kasteleyn1967}.

Additionally, Valiant showed that the \fotwocolor\ problem is \textbf{NP}-complete for planar graphs with $\Delta \geq 10$, and as we will explain in the next section, a recent result by Corrêa~et~al.~\cite{CorreaBrooks2009} implies the \textbf{NP}-completeness for $\Delta \geq 8$. This leaves open the hardness of the problem for graphs of maximum degree between 4 and 7.

In this paper we close the hardness gap by showing that the answer to the \fotwocolor\ problem is always yes for arbitrary graphs with $\Delta \leq 5$ and that the problem becomes \textbf{NP}-complete for planar graphs with $\Delta \geq 6$. We also observe that previous results imply that for graphs with $\Delta \leq 5$, an \fotwocolor ing can be generated efficiently using a simple greedy algorithm.

\subsection{Related Work}
It is easy to decide if a graph has a proper 2-coloring, i.e., is bipartite, since this is the case if and only if it has no cycles of odd length. However, the more general problem of deciding whether a graph has a $k$-coloring, such that the induced subgraphs all satisfy some given property $\pi$, is often significantly harder or intractable, even for $k=2$. The \fotwocolor\ problem belongs to this class of problems, where $\pi$ is the property that the induced subgraphs have a full functional orientation.

This class of graph coloring problems has been studied for arbitrary properties~\cite{Brown1987,Frick1993survey}, as well as specific properties, e.g., the induced subgraphs must be acyclic~\cite{Chartrand1968point,Chartrand1969planar} or complete~\cite{Albertson1989}. A large number of these properties involve avoiding certain induced subgraphs. This has been studied for both finite and infinite families of forbidden graphs~\cite{ Broersma2006} as well as for some fixed graph~\cite{Chartrand1968,Achlioptas1997}. 

Another and widely studied problem of this type, which is more closely related to the \fotwocolor\ problem, is that of \emph{defective coloring}~\cite{andrews1985,harary1985,cowen1986,CowenDefective1997,CorreaBrooks2009}. This is the problem of coloring the vertices with $k$ colors such that every induced subgraph has maximum degree $d$. If such a coloring exists we say that the graph is $(k,d)$-colorable. Cowen~et~al.~\cite{CowenDefective1997} showed that deciding if a planar graph of maximum degree 5 is $(2,1)$-colorable is \textbf{NP}-complete.

Valiant's proof of \textbf{NP}-completeness for \fotwocolor\ was based on the following reduction from $(2,1)$-coloring: Let $G$ be an instance for $(2,1)$-coloring. Then the graph $G'$, obtained from $G$ by duplicating every edge in $G$, is \fotwocolor able if and only if $G$ is $(2,1)$-colorable. This implies that \fotwocolor\ is \textbf{NP}-complete for planar graphs of maximum degree 10. Recently, Corrêa~et~al.~\cite{CorreaBrooks2009} improved the work of Cowen~et~al.~\cite{CowenDefective1997} by showing that $(2,1)$-coloring is \textbf{NP}-complete already for planar graphs of maximum degree 4. By Valiant's reduction this implies that \fotwocolor\ is \textbf{NP}-complete for planar graphs of maximum degree 8.

The \fotwocolor\ problem is also related to the maximum cut problem, since for $\Delta \leq 5$ a maximum cut is a $(2,2)$-coloring, and therefore also an \fotwocolor ing. This follows from the fact that the induced subgraphs are either paths or cycles, which trivially have a full functional orientation. However, unless $G$ is planar, this does not imply an efficient algorithm for finding an \fotwocolor ing, since finding a maximum cut is known to be \textbf{NP}-hard even for simple graphs of maximum degree 3~\cite{Yannakakis78}.

Lovász~\cite{LovaszDecomposition1966} showed that any graph of maximum degree $\Delta$ can be $(k,\lfloor \Delta / k \rfloor)$-colored, and Cowen~et~al.~\cite{CowenDefective1997} noted that for graphs on $v$ vertices, such a coloring can be found in $O(\Delta v)$ time using a simple greedy algorithm: Initially, let all vertices have the same color. Repeatedly, pick a vertex $u$ having more than $\lfloor \Delta / k \rfloor$ neighbours with the same color as $u$. If there is no such $u$ the coloring is a $(k,\lfloor \Delta / k \rfloor)$-coloring. Otherwise, there must exist a different color for $u$ such that at most $\lfloor \Delta / k \rfloor$ of $u$'s neighbours have this color.

Lovász's result implies that $(2,1)$-coloring is not \textbf{NP}-complete for $\Delta = 3$, and hence the simple reduction, given by Valiant, will not work to prove \textbf{NP}-completeness of \fotwocolor\ for $\Delta \geq 6$.

\subsection{Our Results}
Our main result is to settle the hardness question of the \fotwocolor\ problem for graphs of maximum degree $\Delta$. We show the following theorem
\begin{theorem} Let $G$ be a multigraph with $v$ vertices and maximum degree $\Delta$.
\begin{itemize}\label{thm:main}
\item[(i)] If $\Delta \leq 5$ then $G$ can be \fotwocolor ed in $O(v)$ time.
\item[(ii)] If $\Delta \geq 6$ the \fotwocolor\ problem is \textbf{NP}-complete, even for planar graphs.
\end{itemize}
\end{theorem}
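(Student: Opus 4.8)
The plan is to first reduce the whole problem to a purely structural coloring condition. I would prove as a preliminary lemma that a multigraph $H$ admits a full functional orientation if and only if every connected component $C$ of $H$ satisfies $|E(C)| \le |V(C)|$, i.e.\ $H$ is a \emph{pseudoforest} (each component is a tree or contains exactly one cycle). The forward direction is a counting argument: each of the $|V(C)|$ vertices contributes exactly one outgoing arc, every edge must receive at least one arc (the orientation is full) and can receive at most two, so $|E(C)| \le |V(C)|$. For the converse I would exhibit the orientation explicitly: orient the unique cycle of a unicyclic component cyclically and direct every remaining edge toward the cycle; for a tree, root it, direct every non-root vertex to its parent, and direct the root along one incident edge (which then becomes bidirected). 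Consequently, a graph is \fotwocolor able exactly when $V$ can be partitioned into two sets, each inducing a pseudoforest.

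With this reformulation, part (i) is immediate. By Lovász's theorem a graph with $\Delta \le 5$ has a $(2, \lfloor 5/2 \rfloor) = (2,2)$-coloring, which the greedy algorithm of Cowen et al.\ finds in $O(\Delta v) = O(v)$ time since $\Delta$ is constant. In a $(2,2)$-coloring every monochromatic component has maximum degree at most $2$ and is therefore a disjoint union of paths and cycles; each such component trivially contains at most one cycle, hence is a pseudoforest and admits a full functional orientation. Thus the computed coloring is an \fotwocolor ing.

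For part (ii), membership in \textbf{NP} is clear: a $2$-coloring together with a functional orientation is a polynomial-size certificate, and checking that every vertex has exactly one outgoing edge and that every monochromatic edge is oriented takes linear time. For \textbf{NP}-hardness I would give a gadget reduction producing planar instances of maximum degree $6$, taking as the source an \textbf{NP}-complete defective-coloring problem --- the natural choice being $(2,1)$-coloring of planar graphs of maximum degree $4$ (Corrêa et al.), or alternatively Planar 3-SAT. The guiding principle is that the \emph{forbidden} event to be detected by the pseudoforest condition is a monochromatic component of cyclomatic number at least $2$ (e.g.\ a monochromatic theta subgraph, or two parallel edges together with a third edge): such a component has $|E| > |V|$ and destroys every full functional orientation. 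I would design a variable/consistency gadget whose admissible $2$-colorings encode a truth value, and a constraint gadget that can be extended to a pseudoforest partition precisely when the corresponding clause (or defect constraint) is satisfied, wiring them together so that planarity is preserved and no vertex exceeds degree $6$. Hardness for every $\Delta \ge 6$ then follows by attaching to the degree-$6$ construction pendant gadgets that raise the maximum degree without affecting colorability.

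The main obstacle is the gadget engineering in part (ii): simultaneously (a) making the pseudoforest / cyclomatic-number condition faithfully encode the logical constraint, (b) keeping every gadget planar, and (c) respecting the tight degree bound $\Delta \le 6$ --- which is exactly what rules out the naive edge-doubling reduction, since that only reaches $\Delta = 8$. Verifying that each gadget's local colorings are exactly the intended ones, and that gadgets compose without creating spurious monochromatic second cycles across their shared boundaries, is the crux of the argument.
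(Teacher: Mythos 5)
Your preliminary lemma (a multigraph has a full functional orientation if and only if every component is acyclic or unicyclic) is exactly the paper's Lemma~\ref{lem:fullfo}, proved the same way, and your part (i) coincides with the paper's argument: Lovász plus the greedy algorithm of Cowen et al.\ yields a $(2,2)$-coloring in $O(\Delta v)=O(v)$ time, and its monochromatic components, having maximum degree at most 2, are paths and cycles, hence admit full functional orientations. That portion is correct and complete.

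Part (ii), however, is not a proof --- it is a plan for one. The entire mathematical content of the hardness result is the construction and verification of the gadgets, and you explicitly defer it (``I would design a variable/consistency gadget \ldots'', ``the gadget engineering \ldots\ is the crux of the argument''). The paper actually does this work: it builds an EQ-gadget of connection degree 2 from a tripled-edge NOT-gadget and two unicyclic blocks, whose key property is that the intermediate vertex $\gamma$ is forced to the color \emph{opposite} to $x$ and $y$, so gadgets can be chained without exhausting any vertex's degree budget; a planar OR-gadget (a 5-cycle attached to inputs and output through EQ-gadgets); a VAR$_{n,m}$-gadget chaining EQ- and NE-gadgets; and --- the step your sketch does not address concretely at all --- a planar crossover XO-gadget of maximum degree 6, which is what allows a reduction from ordinary 3-SAT to produce a \emph{planar} instance. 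Each of these must simultaneously respect planarity and the tight degree bound 6, and the verification that each gadget's \fotwocolor ings are exactly the intended ones (via the ``two cycles in one monochromatic component'' obstruction you correctly identify) is where all the difficulty lies; indeed, the existence of such gadgets is precisely what was open between $\Delta=5$ and $\Delta=8$. Naming the right obstruction and plausible source problems (Corrêa et al.'s $(2,1)$-coloring, or Planar 3-SAT) does not discharge this: without exhibiting and verifying at least one complete, composable set of gadgets --- including either a crossover gadget or a proof that your layout can be drawn planar --- part (ii) remains unproven.
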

\autoref{thm:main}(i) follows immediately from the results of Lovász~\cite{LovaszDecomposition1966} and Cowen~et al.~\cite{CowenDefective1997} for $k=2$. Considering that Valiant~\cite{ValiantHolographic2008} gave an involved decision algorithm for the case of planar graphs of maximum degree 3, it is perhaps surprising that arbitrary graphs of maximum degree 5 always have an \fotwocolor ing.

In the remaining part of the paper we prove \autoref{thm:main}(ii) by a reduction from 3-SAT. The \textbf{NP}-completeness for arbitrary graphs with $\Delta \geq 6$ is first established by a construction similar to those by Cowen~et~al.~\cite{CowenDefective1997} and Corrêa~et~al.~\cite{CorreaBrooks2009}. We extend the proof to hold for planar graphs by giving a planar crossover gadget, which we use to resolve any crossing edges.

\section{\textbf{NP}-completeness of \fotwocolor\label{sec:npc}}

In this section we establish \autoref{thm:main}(ii) by a reduction from 3-SAT in conjunctive normal form (3-CNF). To do so, we will, given an instance $\Phi$ of 3-CNF, construct a graph $G_\Phi$, which has an \fotwocolor ing if and only if $\Phi$ is satisfiable. An example of our construction is given in \autoref{fig:NPCex}. To construct such a graph we require OR-gadgets for choice, VAR-gadgets for consistency and EQ-gadgets to connect VAR-gadgets to OR-gadgets. The main challenge is to construct these gadgets such that $G_\Phi$ has maximum degree 6.

  \begin{figure}\centering
   \begin{tikzpicture}[every node/.style=defnode]

    \draw[dashed] (-0.3,-0.9) rectangle (1.3,0.3);
    \draw[dashed] (0.5,0.3) -- (0.5,-0.9);
    \draw (0,-0.9) node[draw=none,fill=none,label={below:$x_1$}] {};
    \draw (1,-0.9) node[draw=none,fill=none,label={below:$\overline{x_1}$}] {};

    \draw[dashed] (2.7,-0.9) rectangle (4.3,0.3);
    \draw[dashed] (3.5,0.3) -- (3.5,-0.9);
    \draw (3,-0.9) node[draw=none,fill=none,label={below:$x_2$}] {};
    \draw (4,-0.9) node[draw=none,fill=none,label={below:$\overline{x_2}$}] {};

    \draw[dashed] (5.7,-0.9) rectangle (7.3,0.3);
    \draw[dashed] (6.5,0.3) -- (6.5,-0.9);
    \draw (6,-0.9) node[draw=none,fill=none,label={below:$x_3$}] {};
    \draw (7,-0.9) node[draw=none,fill=none,label={below:$\overline{x_3}$}] {};

    \draw[dashed] (-0.3,1.7) rectangle (2.3,4.3);
    \draw (1,4.3) node[draw=none,fill=none,label={above:$c_1$}] {};
    \draw[dashed] (4.7,1.7) rectangle (7.3,4.3);
    \draw (6,4.3) node[draw=none,fill=none,label={above:$c_2$}] {};

    \draw (0,0) node(x11) {};
    \draw (1,0) node(x12) {};
    \draw (3,0) node(x21) {};
    \draw (4,0) node(x22) {};
    \draw (6,0) node(x31) {};
    \draw (7,0) node(x32) {};
    \draw[dedge] (x11) to[bend right=90] node{\tiny VAR} (x12);
    \draw[dedge] (x21) to[bend right=90] node{\tiny VAR} (x22);
    \draw[dedge] (x31) to[bend right=90] node{\tiny VAR} (x32);

    \draw (0,2) node(c11) {};
    \draw (1,2) node(c12) {};
    \draw (2,2) node(c13) {};
    \draw (0,4) node(c14) {};

    \draw (5,2) node(c21) {};
    \draw (6,2) node(c22) {};
    \draw (7,2) node(c23) {};
    \draw (7,4) node(c24) {};
    \draw[dedge] (c11) to node(c15){\tiny OR} (c14);
    \draw[dedge] (c13) to node(c16){\tiny OR} (c15);
    \draw[dedge] (c12) to (c16);
    \draw[dedge] (c23) to node(c25){\tiny OR} (c24);
    \draw[dedge] (c21) to node(c26){\tiny OR} (c25);
    \draw[dedge] (c22) to (c26);

    \draw[dedge] (x11) to node{\tiny EQ} (c11);
    \draw[dedge] (x12) to node{\tiny EQ} (c21);
    \draw[dedge] (x21) to node{\tiny EQ} (c12);
    \draw[dedge] (x22) to node{\tiny EQ} (c22);
    \draw[dedge] (x31) to node{\tiny EQ} (c13);
    \draw[dedge] (x32) to node{\tiny EQ} (c23);

    \draw[dedge] (c14) to node{\tiny EQ} (c24);
   \end{tikzpicture}
   \caption{The graph $G_\Phi$ generated for the 3-SAT instance \( \Phi = \left( x_1 \vee x_2 \vee x_3 \right) \wedge \left( \overline{x_1} \vee \overline{x_2} \vee \overline{x_3} \right) \), where \(x_1\),  \(x_2\) and \(x_3\) are the variables and \(c_1\) and \(c_2\) are the clauses.}
   \label{fig:NPCex}
  \end{figure}

First, in \autoref{sec:npc-preliminaries} we will characterize full functional orientations of graphs and prove the existence of the VAR, OR and EQ gadgets. In \autoref{sec:npc-arbitrary} we prove that the construction of $G_\Phi$ implies \textbf{NP}-completeness of \fotwocolor\ for arbitrary graphs of maximum degree 6. Finally, in \autoref{sec:npc-planar} we provide a planar crossover gadget of maximum degree 6 to eliminate edge-crossings in $G_\Phi$, thereby proving that the \fotwocolor\ problem is \textbf{NP}-complete for planar graphs of maximum degree 6. 

\subsection{Preliminaries}\label{sec:npc-preliminaries}
The following lemma will be useful.
\begin{lemma}\label{lem:fullfo}
$G$ has a full functional orientation if and only if it consists of acyclic and unicyclic components.
\end{lemma}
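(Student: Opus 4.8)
The lemma asserts that a multigraph $G$ admits a full functional orientation precisely when every connected component is either acyclic or unicyclic (i.e.\ contains at most one cycle, equivalently has at most as many edges as vertices).

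\textbf{Counting characterization and reduction to components.} Since a full functional orientation directs \emph{every} edge and requires each non-isolated vertex to have exactly one outgoing edge, it is natural to count out-degrees. I would first observe that the property is component-local: $G$ has a full functional orientation if and only if each connected component does, because the orientation constraint at a vertex only involves its incident edges. So it suffices to prove the claim for a single connected component $H=(V_H,E_H)$. The key quantitative observation is that summing out-degrees counts every directed edge-end, so a full functional orientation assigns each non-isolated vertex out-degree exactly $1$, giving total out-degree $|V_H|$ (assuming $H$ is nontrivial). Since every edge is directed and contributes at least one to the total out-degree, this forces $|E_H| \le |V_H|$. A connected graph with $|E_H| \le |V_H|$ is exactly one that is either a tree ($|E_H| = |V_H|-1$, acyclic) or has exactly one cycle ($|E_H| = |V_H|$, unicyclic); this is the standard edge-count characterization I would invoke.

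\textbf{The two directions.} For the forward direction (existence implies acyclic or unicyclic), the counting argument above does the work: a full functional orientation forces $|E_H|\le|V_H|$ on each component, which is equivalent to being acyclic or unicyclic. For the converse, I would construct an explicit orientation. If $H$ is a tree, root it at any vertex and orient every edge toward the parent (away from the child); then each non-root vertex has out-degree exactly $1$, but the root has out-degree $0$ unless it has an edge to redirect---so instead I would orient each edge \emph{toward the root}, giving every non-root vertex out-degree $1$, and handle the root separately. The cleaner uniform approach: if $H$ is unicyclic, first orient the unique cycle consistently (cyclically), giving every cycle vertex out-degree $1$; the remaining edges form trees hanging off the cycle, and I orient each such tree edge pointing toward the cycle, so every non-cycle vertex gets out-degree $1$ as well. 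If $H$ is a tree, pick an arbitrary edge to ``double'' conceptually, or equivalently root at a vertex $r$, orient all edges toward $r$ so every vertex except $r$ has out-degree $1$, and give $r$ its single required outgoing edge along any incident edge (which is then directed in both directions). This yields a valid full functional orientation in every case.

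\textbf{Main obstacle.} The principal subtlety is the tree case and the treatment of the root, together with the role of multigraphs and the rule that a single edge may carry two opposite directions. A tree on $|V_H|$ vertices has only $|V_H|-1$ edges, so there are not enough distinct edges to give every vertex its own outgoing edge; the construction must exploit that one edge can be oriented in both directions (so one edge supplies the outgoing direction for both its endpoints). I would make sure the orientation rules in the paper's definition (a single edge may be assigned two opposite directions) are used explicitly here, and verify that isolated vertices, which are exempt from the out-degree requirement, are handled correctly so that both the trivial single-vertex component and larger components are covered.
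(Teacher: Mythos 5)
Your proposal is correct and follows essentially the same route as the paper: the nontrivial direction is the same counting/pigeonhole argument (a full functional orientation gives each non-isolated vertex exactly one outgoing edge, forcing $|E_H|\le|V_H|$ per component, i.e.\ ruling out components with two or more cycles), and the converse is the easy constructive direction, which the paper merely asserts and you spell out (orient the unique cycle cyclically, hang trees pointing toward the cycle, and in the pure tree case let one edge incident to the root carry both directions). Your explicit treatment of the tree root and of the both-directions rule is a fine elaboration of what the paper leaves to the reader.
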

\begin{proof}
It is easy to see that $G$ has a full functional orientation if it consists of acyclic and unicyclic components. Conversely, suppose that $G$ has a full functional orientation and a component containing two or more cycles. This component has strictly fewer vertices than edges and thus, by the pigeonhole principle, there is a vertex with two edges directed away from it, which is a contradiction.
\end{proof}

\begin{corollary}\label{lem:solution}
$G$ is \fotwocolor able if and only if it has a 2-coloring where every induced monochromatic component is acyclic or unicyclic.
\end{corollary}

\noindent We now show how to construct the EQ, VAR and OR gadgets. Each gadget will be a planar embedded graph $G=(V,E)$ and its unique face of infinite area is called the \emph{external face}. A subset of the vertices on the external face \( V' \subseteq V \) will be called the \emph{external vertices} of the gadget. The \emph{connection degree} of a gadget is the maximum degree of its external vertices. Gadgets are combined by identifying the external vertices. 

\begin{lemma}\label{lem:notgadget}
The \emph{NOT}-gadget of connection degree 3, shown below, ensures that \( x \) and \( y \) have different colors in any \fotwocolor ing.
\begin{center}
\begin{tikzpicture}
\draw (1,-0.5) node[anchor=south east] {\tiny NOT};
\begin{scope}[every node/.style=defnode]
\draw[dashed] (-1,-0.5) rectangle (1,0.5);
\draw (1,0) node[ionode,label={right:$y$}](y) {};
\draw (-1,0) node[ionode,label={left:$x$}](x) {};
\draw (x) to [bend left=20] (y);
\draw (x) to (y);
\draw (x) to [bend right=20] (y);
\end{scope}
\end{tikzpicture}
\end{center}
\end{lemma}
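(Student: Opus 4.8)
The plan is to argue by contradiction using the structural characterization of \autoref{lem:solution}. Suppose the NOT-gadget is embedded in some graph admitting an \fotwocolor ing in which the external vertices $x$ and $y$ are assigned the \emph{same} color. I would then inspect the induced monochromatic component containing $x$ and $y$ and derive a contradiction.

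First I would note that if $x$ and $y$ share a color, then all three parallel edges joining them lie in the subgraph induced on that color class, since both of their endpoints belong to it. Hence the monochromatic component $C$ containing $x$ and $y$ includes at least these three edges on the two vertices $x,y$. Counting, the subgraph consisting of $x,y$ and the three parallel edges already has cyclomatic number $3 - 2 + 1 = 2$, and adjoining the remaining vertices and edges of $C$ cannot decrease this quantity, so $C$ contains at least two independent cycles.

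Then I would invoke \autoref{lem:solution}: in a valid \fotwocolor ing every induced monochromatic component must be acyclic or unicyclic. Since $C$ contains two independent cycles it is neither, so by \autoref{lem:fullfo} the component $C$ has no full functional orientation, contradicting the assumption that the coloring is an \fotwocolor ing. Therefore $x$ and $y$ cannot receive the same color, which is exactly the claim. It is worth remarking in passing that the triple edge is essential: two parallel edges would form only a single (2-)cycle, leaving the component unicyclic and hence permitted, so it is precisely the third edge that forces two independent cycles.

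I do not anticipate a genuine obstacle in this argument. The only point needing a little care is to observe that the conclusion is independent of how the gadget is wired into the surrounding graph: the two independent cycles between $x$ and $y$ appear in the monochromatic component as soon as $x$ and $y$ agree in color, regardless of what else $C$ contains, so the local argument suffices.
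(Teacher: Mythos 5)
Your proof is correct and takes the same route as the paper, which simply notes that the claim ``follows trivially from \autoref{lem:solution}''; you have merely made the trivial step explicit (the three parallel edges between same-colored $x$ and $y$ force a monochromatic component of cyclomatic number at least $2$, hence neither acyclic nor unicyclic). The observation that monotonicity of the cycle space makes the argument independent of the surrounding graph is exactly the right point of care.
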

\begin{proof}
This follows trivially from \autoref{lem:solution}.
\end{proof}

\begin{lemma}\label{lem:eqgadget}
The \emph{EQ}-gadget of connection degree \( 2 \), shown below, ensures that \( x \) and \( y \) have the same color in any \fotwocolor ing.
\begin{center}
\begin{tikzpicture}
\draw (3,-2) node[anchor=south east] {\tiny EQ};
\draw[dashed] (-3,-2) rectangle (3,1.5);
\begin{scope}[every node/.style=defnode]
\draw (3,1) node[ionode,label={right:$y$}](y) {};
\draw (-3,1) node[ionode,label={left:$x$}](x) {};
\draw (0,1) node[label={above:$\gamma$}](7) {};

\begin{scope}[shift={(-1.5,-1)}]
\draw (90:1) node[label={above:$\alpha$}](1) {};
\draw (210:1) node[label={left:$a$}](2) {};
\draw (330:1) node[label={right:$b$}](3) {};
\draw[tedge] (2) to (3);
\draw (2) to node{\tiny NOT} (3);
\draw (3) to [bend left=20] (1);
\draw (3) to [bend right=20] (1);
\draw (1) to [bend left=20] (2);
\draw (1) to [bend right=20] (2);
\end{scope}
\begin{scope}[shift={(1.5,-1)}]
\draw (90:1) node[label={above:$\beta$}](4) {};
\draw (210:1) node[label={left:$c$}](5) {};
\draw (330:1) node[label={right:$d$}](6) {};
\draw[tedge] (5) to (6);
\draw (5) to node{\tiny NOT} (6);
\draw (6) to [bend left=20] (4);
\draw (6) to [bend right=20] (4);
\draw (4) to [bend left=20] (5);
\draw (4) to [bend right=20] (5);
\end{scope}

\draw [bend left=10] (x) to (7);
\draw [bend left=10] (7) to (x);
\draw [bend left=10] (y) to (7);
\draw [bend left=10] (7) to (y);
\draw (7) to (1) to (4) to (7);
\end{scope}

\end{tikzpicture}
\end{center}
\end{lemma}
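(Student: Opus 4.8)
The plan is to argue by contradiction through \autoref{lem:solution}. Suppose some FO-2-coloring of a graph containing this gadget colors $x$ and $y$ differently; I will produce a monochromatic induced subgraph of cyclomatic number $2$, which by \autoref{lem:fullfo} admits no full functional orientation, contradicting \autoref{lem:solution}. The engine of the argument is one local observation: if a vertex $w$ is joined by a double edge to each of two vertices whose colors are forced to differ, then $w$ matches exactly one of them in color, and that monochromatic double edge forms a $2$-cycle inside $w$'s color class. I call such a $w$ \emph{cyclically bound}.

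First I would apply \autoref{lem:notgadget} to the two embedded NOT-gadgets, which forces $a \neq b$ and $c \neq d$ in color. Since $\alpha$ is joined by double edges to both $a$ and $b$, the observation makes $\alpha$ cyclically bound; symmetrically $\beta$ is cyclically bound through $c$ and $d$. The $2$-cycle binding $\alpha$ uses only $\alpha$ and one of $\{a,b\}$, and the one binding $\beta$ uses only $\beta$ and one of $\{c,d\}$, so these two cycles are vertex-disjoint.

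Next I would use the assumption $x \neq y$. Because $\gamma$ is joined by double edges to both $x$ and $y$, it too is cyclically bound, its $2$-cycle using only $\gamma$ and one of $\{x,y\}$ --- disjoint from the previous two. Now $\gamma$, $\alpha$, $\beta$ are pairwise adjacent via the single edges of the triangle $\gamma\alpha\beta$, and each is cyclically bound in its own color class. Two of them, say $p$ and $q$, must share a color; then both their binding $2$-cycles lie in this common color class, and the single edge $pq$ joins them inside it. The resulting monochromatic component contains two vertex-disjoint $2$-cycles linked by $pq$, a connected subgraph on $4$ vertices with $5$ edges, hence of cyclomatic number $2$. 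This contradicts \autoref{lem:solution}, so every valid coloring has $x=y$.

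Finally, to confirm the gadget forces equality without being unsatisfiable, I would exhibit valid colorings realizing $x=y$: give $\gamma$ the color opposite to $x=y$ and give $\alpha$ and $\beta$ opposite colors (each claw permits either color for its apex), then verify directly that every monochromatic component is a lone vertex or a single $2$-cycle. I expect the one point needing care to be the disjointness of the three binding $2$-cycles, since it is precisely this disjointness that turns the two cycles linked along $pq$ into cyclomatic number $2$ rather than a single collapsed cycle; but as the partner sets $\{a,b\}$, $\{c,d\}$ and $\{x,y\}$ are pairwise disjoint, this is immediate.
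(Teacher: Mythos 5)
Your proof of the main (contradiction) direction is correct and rests on the same framework as the paper's --- contradiction via \autoref{lem:solution}, the NOT-gadgets forcing $a\neq b$ and $c\neq d$, plus a converse witness coloring --- but you finish the contradiction differently. The paper normalizes the coloring up to symmetry ($x$, $a$, $c$, $\gamma$ colored $0$; $y$, $b$, $d$ colored $1$) and then checks by enumeration that all four colorings of $\alpha$ and $\beta$ produce a monochromatic component with two cycles. You replace the normalization and the enumeration with a uniform argument: each of $\gamma$, $\alpha$, $\beta$ is forced into a monochromatic $2$-cycle with a partner drawn from the pairwise disjoint sets $\{x,y\}$, $\{a,b\}$, $\{c,d\}$, and by pigeonhole two of $\gamma$, $\alpha$, $\beta$ share a color, so the triangle edge between them links two disjoint $2$-cycles into a connected monochromatic subgraph with $4$ vertices and $5$ edges, i.e.\ cyclomatic number $2$. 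This buys a case-free proof that also explains \emph{why} the paper's four cases all fail; the paper's version is terser but leaves the four-case verification (and the legitimacy of its without-loss-of-generality reductions) to the reader.

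One small slip in your converse direction: with $\gamma$ colored opposite to $x=y$ and $\alpha\neq\beta$, exactly one of $\alpha$, $\beta$ shares $\gamma$'s color, so the monochromatic component containing $\gamma$ is a $2$-cycle with $\gamma$ attached as a pendant vertex --- it is not ``a lone vertex or a single $2$-cycle'' as you predict. This is harmless, since a $2$-cycle with a pendant is still unicyclic and \autoref{lem:solution} only requires components to be acyclic or unicyclic, but the verification as you literally stated it would not go through. Note also that your requirement $\alpha\neq\beta$ is genuinely necessary, not just convenient: if $\alpha$ and $\beta$ were both colored opposite to $\gamma$, the edge $\alpha\beta$ would link their two binding $2$-cycles and recreate exactly the bicyclic configuration of your own contradiction argument.
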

\begin{proof}
Assume there exists a coloring where \( x \) and \( y \) have different colors. Assume without loss of generality that \( x \) is colored \( 0 \) and \( y \) is colored \( 1 \). Also assume, again without loss of generality, that \( a \), \( c \) and \( \gamma \) are colored \( 0 \) and \( b \) and \( d \) are colored \( 1 \). All four possible colorings of \( \alpha \) and \( \beta \) induce a monochromatic component with two cycles, hence violating \autoref{lem:solution}. Conversely, assume that \( x \) and \( y \) have the same color, without loss of generality assume that color to be \( 0 \). Then \( \gamma \), \( \alpha \), \( a \) and \( c \) may be colored \( 1 \) and \( \beta \), \( b \) and \( d \) may be colored \( 0 \). This is a valid coloring.
\end{proof}

An essential property of the EQ-gadget is that in any \fotwocolor ing of the gadget it holds that $\gamma$ has the opposite color of that of $x$ and $y$. This allows us to connect arbitrary gadgets through intermediate EQ-gadgets, since the orientation of the external vertices are never used internally in the EQ-gadget.

\begin{lemma}\label{lem:negadget}
The NE-gadget of connection degree \( 2 \), shown below, ensures that \( x \) and \( y \) have different colors in any \fotwocolor ing.
\begin{center}
\begin{tikzpicture}
\draw[dashed] (-3,-0.5) rectangle (3,0.5);
\draw (3,-0.5) node[anchor=south east] {\tiny NE};
\begin{scope}[every node/.style=defnode]
\draw (3,0) node[ionode,label={right:$y$}](y) {};
\draw (1,0) node(1) {};
\draw (-1,0) node(2) {};
\draw (-3,0) node[ionode,label={left:$x$}](x) {};
\draw[dedge] (1) to node{\tiny EQ} (y);
\draw[dedge] (2) to node{\tiny EQ} (x);
\draw[tedge] (1) to (2);
\draw (1) to node{\tiny NOT} (2);
\end{scope}
\end{tikzpicture}
\end{center}
\end{lemma}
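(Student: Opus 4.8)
The plan is to treat the NE-gadget as a series composition of three gadgets already analysed: an EQ-gadget joining $x$ to an internal vertex $p$, a NOT-gadget joining $p$ to a second internal vertex $q$, and a second EQ-gadget joining $q$ to $y$. The desired conclusion $\mathrm{color}(x)\neq\mathrm{color}(y)$ should then drop out of chaining the three constituent constraints $\mathrm{color}(x)=\mathrm{color}(p)$, $\mathrm{color}(p)\neq\mathrm{color}(q)$, and $\mathrm{color}(q)=\mathrm{color}(y)$ furnished by \autoref{lem:eqgadget} and \autoref{lem:notgadget}. The connection degree claim is immediate: $x$ and $y$ each lie on a single EQ-gadget, in which an external vertex has degree $2$.

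For the forward direction (any \fotwocolor ing of the gadget forces $\mathrm{color}(x)\neq\mathrm{color}(y)$), I would first record the general principle that the restriction of a valid \fotwocolor ing to an induced subgraph is again valid: by \autoref{lem:solution} every induced monochromatic component is acyclic or unicyclic, and passing to an induced subgraph can only delete vertices and edges, which cannot create a new cycle. Since each constituent gadget sits inside the NE-gadget as an induced subgraph sharing only the vertices $x,p,q,y$ with the rest, the restriction of a global coloring to each constituent is a valid coloring of that constituent. Applying \autoref{lem:eqgadget} to the two EQ-gadgets and \autoref{lem:notgadget} to the NOT-gadget then yields $\mathrm{color}(x)=\mathrm{color}(p)\neq\mathrm{color}(q)=\mathrm{color}(y)$, as required.

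For the reverse direction (a coloring with $\mathrm{color}(x)\neq\mathrm{color}(y)$ is realizable) I would exhibit one explicitly. Assuming without loss of generality $\mathrm{color}(x)=0$ and $\mathrm{color}(y)=1$, I set $\mathrm{color}(p)=0$ and $\mathrm{color}(q)=1$ and color the interior of each constituent using the converse part of the corresponding lemma; for the NOT-gadget this is trivial, as its two endpoints already receive different colors. It then remains to check that these locally valid colorings glue into a globally valid one, i.e. that no induced monochromatic component spanning two constituents is created. This is where the essential property of the EQ-gadget enters: in each EQ-gadget the interior vertex $\gamma$ adjacent to both external vertices receives the color opposite to them. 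Consequently $p$ (color $0$) sees only oppositely coloured neighbours — the vertex $\gamma$ of its EQ-gadget and the vertex $q$ across the NOT-gadget — so $p$ is isolated in its colour class, and symmetrically so is $q$. Hence every monochromatic component of the whole gadget lies within a single constituent and is acyclic or unicyclic by the validity of the constituent colorings, so \autoref{lem:solution} is satisfied.

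The main obstacle is precisely this gluing step: although the three constituent gadgets are individually well understood, a careless composition could in principle merge monochromatic components across the shared vertices $p$ and $q$ and thereby manufacture a component with two or more cycles, violating \autoref{lem:solution}. The remark following \autoref{lem:eqgadget} — that the orientation of an EQ-gadget's external vertices is never needed internally, equivalently that $\gamma$ is always oppositely coloured — is exactly what guarantees that the shared vertices are monochromatically isolated, so that the composition is sound and no cross-gadget cycle can arise.
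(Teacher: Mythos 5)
Your proposal is correct and takes essentially the same route as the paper: the paper's own proof is a one-line appeal to the constituent gadget lemmas (with a citation typo, referencing the NE-lemma itself rather than \autoref{lem:notgadget}), i.e., exactly the series composition EQ--NOT--EQ that you chain together. The gluing argument you supply --- that the shared internal vertices are monochromatically isolated because each EQ-gadget's vertex $\gamma$ takes the color opposite to its external vertices --- is precisely the detail the paper dismisses as ``trivial,'' so you have simply made its proof explicit.
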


\begin{proof}
This follows trivially from \autoref{lem:eqgadget} and \ref{lem:negadget}.
\end{proof}

\begin{lemma}
  The planar OR-gadget of connection degree \( 2 \), shown below, ensures that in any \fotwocolor ing the color of \( z \) is the same as the one of \( x \) or the one of \( y \).
 \label{lem:orgadget}
\begin{center}
\begin{tikzpicture}
\draw[dashed] (-3,-1.5) rectangle (3,1.5);
\draw (3,-1.5) node[anchor=south east] {\tiny OR};
\begin{scope}[every node/.style=defnode]
\draw (-3,-0.5) node[ionode,label={left:$x$}](x) {};
\draw (-3,0.5) node[ionode,label={left:$y$}](y) {};
\draw (0:1) node[label={left:$\alpha$}](4) {};
\draw (72:1) node[label={above:$\beta$}](5) {};
\draw (144:1) node[label={above:$\gamma$}](1) {};
\draw (216:1) node[label={below:$\zeta$}](2) {};
\draw (288:1) node[label={below:$\eta$}](3) {};
\draw (3,0) node[ionode,label={right:$z$}](z) {};
\draw[dedge] (1) to node{\tiny EQ} (y);
\draw[dedge] (2) to node{\tiny EQ} (x);
\draw[dedge] (4) to node{\tiny EQ} (z);
\draw[dedge] (1) to (2) to (3) to (4) to (5) to (1);
\end{scope}
\end{tikzpicture}
\end{center}
\end{lemma}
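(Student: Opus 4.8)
The plan is to analyze the OR-gadget, which is a 5-cycle $\alpha\beta\gamma\zeta\eta$ (the central pentagon) with $\gamma$ connected via an EQ-gadget to $y$, $\zeta$ connected via an EQ-gadget to $x$, and $\alpha$ connected via an EQ-gadget to $z$. By \autoref{lem:eqgadget}, the EQ-gadgets force $\gamma$ to share the color of $y$, $\zeta$ to share the color of $x$, and $\alpha$ to share the color of $z$; moreover, as remarked after \autoref{lem:eqgadget}, the external orientation of these vertices is never used inside the EQ-gadgets, so when reasoning about the pentagon we may treat $\alpha,\gamma,\zeta$ as carrying fixed colors equal to $z,y,x$ respectively, and freely orient the pentagon's own edges. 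By \autoref{lem:solution}, a 2-coloring is valid on the pentagon exactly when each induced monochromatic component is acyclic or unicyclic; since the pentagon is a single 5-cycle, the only way to violate this is if all five pentagon vertices receive the same color, producing a monochromatic 5-cycle that is fine on its own, but the danger is a larger monochromatic component formed together with the double edges. Here I must be careful: the pentagon edges are drawn as single edges, so a monochromatic pentagon is merely unicyclic and hence allowed.

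The cleaner way to phrase the argument is therefore in terms of the color vector $(c(x),c(y),c(z)) = (c(\zeta),c(\gamma),c(\alpha))$. First I would establish the forward direction: assuming a valid \fotwocolor ing of the whole gadget exists, I show that $c(z)$ must equal $c(x)$ or $c(y)$. Suppose for contradiction that $c(z)\neq c(x)$ and $c(z)\neq c(y)$; then $c(x)=c(y)\neq c(z)$, which forces $c(\zeta)=c(\gamma)\neq c(\alpha)$ on the pentagon. I would then check that under this constraint, no coloring of the remaining pentagon vertices $\beta,\eta$ avoids creating a forbidden monochromatic component (one with two or more cycles), by enumerating the few cases for $\beta,\eta \in \{0,1\}$ and invoking \autoref{lem:solution}; the double edges connecting pentagon vertices to the EQ-gadget attachment points are what turn a would-be simple cycle into a component with two cycles, exactly as in the EQ-gadget proof. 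Conversely, for the "only if" direction, whenever $c(z)$ agrees with $c(x)$ or with $c(y)$ I would exhibit an explicit valid coloring of $\beta,\eta$ and a compatible orientation, demonstrating that each monochromatic component is acyclic or unicyclic.

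I expect the main obstacle to be the bookkeeping of which monochromatic components become bicyclic, since the counting depends delicately on which pentagon edges are single and which attachment edges are doubled, together with the internal structure the EQ-gadgets impose on $\alpha,\gamma,\zeta$. The key is to use the property, noted after \autoref{lem:eqgadget}, that each EQ-gadget only constrains the \emph{color} of its external vertex and never consumes that vertex's out-edge internally; this lets me reason locally about the pentagon while treating the EQ-gadgets as black boxes that merely pin down $c(\alpha),c(\gamma),c(\zeta)$. Once that reduction is in place, the proof reduces to a finite case check over the colorings of the pentagon's five vertices subject to the three pinned colors, and in each case I verify acyclicity or unicyclicity of the induced monochromatic components via \autoref{lem:solution}, which I would summarize rather than exhaust.
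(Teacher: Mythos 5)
Your proof skeleton matches the paper's: assume $c(x)=c(y)\neq c(z)$ for contradiction, use the EQ-gadgets to pin $c(\gamma)=c(y)$, $c(\zeta)=c(x)$, $c(\alpha)=c(z)$, enumerate the four colorings of $\beta,\eta$, and verify the remaining colorings of $x,y,z$ extend to valid colorings; your black-box treatment of the EQ-gadgets is also exactly right. But you have the gadget's structure backwards, and this breaks the central step. The five pentagon edges $\gamma\zeta$, $\zeta\eta$, $\eta\alpha$, $\alpha\beta$, $\beta\gamma$ are \emph{double} edges (this is a multigraph; that is what the unlabeled double-stroked lines denote), while the attachment edges you propose to rely on lie \emph{inside} the EQ-gadgets and can never be monochromatic: by the remark after \autoref{lem:eqgadget}, the EQ-gadget's internal vertex adjacent to its external vertices always receives the opposite color, so an EQ-gadget contributes color constraints but never contributes edges to the monochromatic component of a pentagon vertex. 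You cannot simultaneously treat the EQ-gadgets as color-pinning black boxes and also use their internal edges to create the bicyclic components.

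With your reading (single pentagon edges, doubled attachments), the lemma is simply false: take $c(x)=c(y)=c(\gamma)=c(\zeta)=0$ and $c(z)=c(\alpha)=c(\beta)=c(\eta)=1$. The color-$0$ part of the pentagon is the single edge $\gamma\zeta$ and the color-$1$ part is the path $\beta\alpha\eta$; both are acyclic, and each EQ-gadget is internally colorable since its two endpoints agree, so by \autoref{lem:solution} this is a valid \fotwocolor ing with $c(z)$ different from both $c(x)$ and $c(y)$ --- no contradiction ever appears in your case analysis. The paper's argument needs the doubled pentagon edges: with $c(\gamma)=c(\zeta)\neq c(\alpha)$, every one of the four choices for $(c(\beta),c(\eta))$ leaves two adjacent doubled edges in a single monochromatic component (for instance, $c(\beta)=c(\eta)=c(\alpha)$ gives the component $\{\eta,\alpha,\beta\}$ containing the double edges $\eta\alpha$ and $\alpha\beta$), i.e.\ a component with two cycles, contradicting \autoref{lem:solution}.
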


\begin{proof}
Assume for the sake of a contradiction that the external vertices \( x \) and \( y \) have the same color and \( z \) the other color. Then \( \gamma\), \(\zeta \) and \( \alpha \) have the same color as \( x\), \( y \) and \( z \), respectively. All four possible colorings of \( \beta \) and \( \eta \) induce a monochromatic component with two cycles, hence violating \autoref{lem:solution}. It is easy to verify that any other coloring of the external vertices, \(x, y \) and \(z \) is consistent with an \fotwocolor ing.
\end{proof}

\begin{lemma}\label{lem:vargadget}
The VAR$_{n,m}$-gadget of connection degree \( 4 \), shown below, ensures that in any \fotwocolor ing \( x_1,x_2,\ldots, x_n \) have the same color and \( \overline{x}_1, \overline{x}_2, \ldots, \overline{x}_m \) have the opposite color.
\begin{center}
\begin{tikzpicture}
\draw[dashed] (-5,-0.5) rectangle (5,0.5);
\draw (5,-0.5) node[anchor=south east] {\tiny VAR$_{n,m}$};
\begin{scope}[every node/.style=defnode]
\draw (3,-0.5) node[ionode,label={[label distance=-0.14cm]below:$\overline{x}_{m-1}$}](y) {};
\draw (2,0) node(EQ2) {\tiny EQ};
\draw (1,-0.5) node[ionode,label={below:$\overline{x}_m$}](1) {};

\draw (-1,0.5) node[ionode,label={above:$x_{n}$}](2) {};
\draw (-2,0) node(EQ1) {\tiny EQ};
\draw (4,0) node[draw=none](3) {};
\draw (-4,0) node[draw=none](4) {};
\draw (-3,0.5) node[ionode,label={[label distance=-0.17cm]above:$x_{n-1}$}](x) {};

\draw[dedge] (1) to (EQ2) to (y);
\draw[dedge] (2) to (EQ1) to (x);
\draw[dedge] (1) to node{\tiny NE} (2);
\draw[dashed,dedge] (y) to (3);
\draw[dashed,dedge] (x) to (4); 
\end{scope}
\end{tikzpicture}
\end{center}
\end{lemma}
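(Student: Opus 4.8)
The plan is to read the VAR$_{n,m}$-gadget as a single path built from the already-constructed EQ- and NE-gadgets, and then to propagate their color constraints along it. The drawn portion shows that $x_n$ and $x_{n-1}$ are joined by an EQ-gadget, that $\overline{x}_m$ and $\overline{x}_{m-1}$ are joined by an EQ-gadget, and that $x_n$ and $\overline{x}_m$ are joined by an NE-gadget; the dashed edges indicate that the EQ-chain continues, linking $x_1,\dots,x_n$ consecutively on the left and $\overline{x}_1,\dots,\overline{x}_m$ consecutively on the right. Thus the whole gadget is the path
\[
x_1 \overset{\mathrm{EQ}}{-} \cdots \overset{\mathrm{EQ}}{-} x_n \overset{\mathrm{NE}}{-} \overline{x}_m \overset{\mathrm{EQ}}{-} \cdots \overset{\mathrm{EQ}}{-} \overline{x}_1,
\]
and the lemma should follow by transitively combining the constraints of these links.

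For the forward direction I would fix an arbitrary \fotwocolor ing of the gadget and argue as follows. By \autoref{lem:eqgadget}, each EQ-link forces its two endpoints to receive the same color, so transitivity along the positive side gives that $x_1,\dots,x_n$ all share one color, and transitivity along the negative side gives that $\overline{x}_1,\dots,\overline{x}_m$ all share one color. By \autoref{lem:negadget}, the single NE-link forces $x_n$ and $\overline{x}_m$ to receive opposite colors. Combining these statements yields exactly the claimed constraint: the positive literals are monochromatic, the negative literals are monochromatic, and the two classes receive opposite colors.

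For the converse I would exhibit a valid \fotwocolor ing realising this constraint, to confirm that the gadget does not over-constrain: assign all of $x_1,\dots,x_n$ one color and all of $\overline{x}_1,\dots,\overline{x}_m$ the other. Each EQ-link, having equal external colors, admits a valid internal coloring by \autoref{lem:eqgadget}, and the NE-link, having opposite external colors, admits one by \autoref{lem:negadget}. The point is that these internal colorings can be glued along the shared literal vertices: by the remark following \autoref{lem:eqgadget}, the orientation of the external vertices is never used internally, so each sub-gadget's induced monochromatic components — acyclic or unicyclic by \autoref{lem:solution} — remain independent across a join.

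The step I expect to be the main obstacle is precisely this gluing in the converse: verifying that identifying the external vertices of consecutive EQ/NE sub-gadgets never merges two unicyclic monochromatic pieces into one component containing two cycles, which would violate \autoref{lem:solution}. I would discharge it by appealing to the insulation property of the EQ-gadget, namely that the out-edge of each external vertex stays inside the sub-gadget to which it is attached, so that the acyclic/unicyclic classification of \autoref{lem:solution} is preserved component-by-component under the composition.
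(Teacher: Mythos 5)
Your proposal is correct and takes essentially the same approach as the paper: the paper's own proof of this lemma is the one-liner ``This follows trivially from \autoref{lem:eqgadget} and \ref{lem:negadget}'', which is exactly the chain-propagation argument (EQ-links force equality along each side, the NE-link forces inequality across, and the insulation remark after \autoref{lem:eqgadget} justifies the gluing) that you spell out in detail.
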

\begin{proof}
This follows trivially from \autoref{lem:eqgadget} and \ref{lem:negadget}.
\end{proof}

\subsection{NP-completeness for Arbitrary Graphs}\label{sec:npc-arbitrary}

\begin{lemma}\label{thm:fo2color-delta6-anpc}
The \fotwocolor\ problem is \textbf{NP}-complete for graphs with \( \Delta \geq 6 \).
\end{lemma}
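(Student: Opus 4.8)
The plan is to prove NP-completeness in two standard parts: membership in NP, and NP-hardness by the reduction from 3-SAT that the preceding setup has clearly been building toward.

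The plan is to prove the two halves of NP-completeness separately: membership in \textbf{NP}, and \textbf{NP}-hardness via a reduction from 3-SAT that assembles the gadgets established above into the graph $G_\Phi$ sketched in \autoref{fig:NPCex}.

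For membership, I would take the 2-coloring itself as the certificate. By \autoref{lem:solution}, a 2-coloring witnesses FO-2-colorability precisely when every induced monochromatic component is acyclic or unicyclic, and this is checkable in polynomial time: for each monochromatic component it suffices to test whether its number of edges exceeds its number of vertices, since a connected graph has two or more cycles exactly in that case. Hence \fotwocolor\ is in \textbf{NP}.

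For hardness, given a 3-CNF formula $\Phi$ I would build $G_\Phi$ as follows. I interpret the two colors as truth values, designating one color $T$ as \emph{true}, so that a literal reads true exactly when its vertex receives $T$. For each variable $x_i$ occurring in $n_i$ positive and $m_i$ negative literals I install a VAR$_{n_i,m_i}$-gadget, which by \autoref{lem:vargadget} forces all positive occurrences to one color and all negative occurrences to the opposite color, encoding a consistent assignment. For each clause $(\ell_1 \vee \ell_2 \vee \ell_3)$ I chain two OR-gadgets, first combining $\ell_1,\ell_2$ into an auxiliary vertex $w$ and then $w,\ell_3$ into the clause output $z$; using \autoref{lem:orgadget} one verifies that if $z$ is colored $T$ then at least one of $\ell_1,\ell_2,\ell_3$ is colored $T$, and conversely a true literal can always be propagated to $z=T$. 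Finally, using EQ-gadgets (\autoref{lem:eqgadget}) I link each literal occurrence to its VAR-gadget terminal and chain all clause outputs together so that they are forced to the common color $T$. The equivalence then follows in both directions: a satisfying assignment yields a valid coloring by coloring true literals $T$ and filling each gadget's interior as its lemma permits, while any valid coloring of $G_\Phi$ forces all clause outputs to a common $T$ and, through the OR-constraints, a literal colored $T$ in every clause, which the VAR-gadgets certify as a globally consistent assignment.

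The main obstacle, and the consideration that dictated the gadget designs above, is keeping the maximum degree at $6$. I would verify this at every vertex where gadgets are identified: the VAR-gadget terminals, of connection degree $4$, are joined only to EQ-gadget terminals of connection degree $2$, summing to $6$, while all remaining junctions involve only OR-, EQ- and NE-gadgets of connection degree $2$. The one place requiring care is a clause output where several EQ-gadgets meet when the outputs are chained into a single component; routing these connections as a path rather than a star keeps each such vertex at degree at most $2+2+2 = 6$. Since the construction is plainly polynomial in the size of $\Phi$, the two halves together establish the lemma.
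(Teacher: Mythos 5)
Your proposal is correct and follows essentially the same route as the paper's own proof: membership in \textbf{NP} by polynomial-time verification of a coloring, and hardness via the identical reduction from 3-CNF using one VAR$_{n,m}$-gadget per variable, two chained OR-gadgets per clause forming a three-input OR, and EQ-gadgets to connect literal occurrences to VAR terminals and to tie the clause outputs together, with the same two-directional correctness argument and the same degree accounting (VAR terminals of connection degree $4$ meeting EQ terminals of connection degree $2$). Your write-up is in places slightly more explicit than the paper's (the certificate-checking step via \autoref{lem:solution} and the path-versus-star chaining of clause outputs), but the approach and construction coincide.
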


\begin{proof}
 We prove the \textbf{NP}-completeness of \fotwocolor\ by a reduction from 3-SAT in conjunctive normal form (3-CNF). First note that the problem is in \textbf{NP}, as the validity of an \fotwocolor ing can be verified in polynomial time.

Given an arbitrary instance \( \Phi \) of  3-CNF, we will construct a graph \( G_\Phi \) in polynomial time, such that \( G_\phi \) is \fotwocolor able if and only if \( \Phi \) is satisfiable. To construct the graph \( G_\Phi \), we use the EQ-gadget of \autoref{lem:eqgadget}, the OR-gadget of \autoref{lem:orgadget} and the VAR$_{n,m}$-gadget of \autoref{lem:vargadget}. The process is as follows: For every variable \( X \) instantiate a VAR$_{n,m}$-gadget with \( n \) being the number of clauses in which \( X \) occurs unnegated and \( m \) the corresponding negated occurrences. Then for every clause two OR-gadgets are instantiated and the output of one is connected to an input of the other, thus forming a three input OR-gadget. The remaining OR-gadget outputs are then connected together using EQ-gadgets, while the VAR-gadgets are connected with the OR-gadgets, using EQ-gadgets, such that the OR-gadget for clause \( C \) is connected with the unegated (negated) side of the VAR\(_{n,m}\)-gadget for variable \( X \) if and only if \( X \) occurs unnegated (negated) in \( C \). An example construction may be seen in \autoref{fig:NPCex}.

  Consider the case where there exists an \fotwocolor ing of the graph. In this case the final outputs of the OR-gadgets will be identically colored. Let this color correspond to true. By \autoref{lem:orgadget} at least one of the three inputs has the same color, i.e., at least one literal in every clause is true. Also \autoref{lem:vargadget} ensures that the variables are assigned consistent values.
  Conversely, consider a satisfying assignment \( \mu \) to \( \Phi \). From \( \mu \), it is possible to create an \fotwocolor ing in the following fashion: For every variable gadget, assign color 0 (1) to the \(n\) side and 1 (0) to the \(m\) side of the gadget if the variable is true (false) in \( \mu \). These are then propagated by the equality gadgets and, as the assignment was satisfying, every OR-gadget has the color 0 on at least one input, therefore the output may also be colored 0. This coloring is therefore valid.

  Therefore each \fotwocolor ing of the graph \( G_\Phi \) corresponds to a satisfying assignment of \( \Phi \). Additionally the existence of a satisfying assignment to \( \Phi \) implies the existence of an \fotwocolor ing of \( G_\Phi \). Thus showing the \textbf{NP}-completeness of \fotwocolor\ for arbitrary graphs with \( \Delta \geq 6 \).
\end{proof}

\subsection{NP-completeness for Planar Graphs}\label{sec:npc-planar}

We extend the proof of \autoref{thm:fo2color-delta6-anpc} to planar graphs by the elimination of crossing edges using a crossover gadget. If such a gadget, of maximum degree \( 6 \), exists, then \autoref{thm:main}(ii) follows.

\begin{lemma}\label{lem:xover}
The XO-gadget of connection degree \( 4 \), shown below, ensures that \( x \) and \( x' \) as well as \( y \) and \( y' \) have the same colors in any \fotwocolor ing.
\begin{center}
\begin{tikzpicture}[scale=1.4]
\draw (3.5,-3.5) node[anchor=south east] {\tiny XO};
\draw[dashed] (-3.5,-3.5) rectangle (3.5,3.5);
\begin{scope}[every node/.style=defnode]
\draw (0.5,0.5) node(1) {};
\draw (-0.5,0.5) node(2) {};
\draw (-0.5,-0.5) node(3) {};
\draw (0.5,-0.5) node(4) {};
\draw (2.5,1) node(5) {};
\draw (1,2.5) node(6) {};
\draw (-1,2.5) node(7) {};
\draw (-2.5,1) node(8) {};
\draw (-2.5,-1) node(9) {};
\draw (-1,-2.5) node(10) {};
\draw (1,-2.5) node(11) {};
\draw (2.5,-1) node(12) {};
\draw (1.5,0) node(13) {};
\draw (-1.5,0) node(14) {};
\draw (2.5,2.5) node(15) {};
\draw (-2.5,3.5) node[ionode,label={above:$x$}](16) {};
\draw (-2.5,-2.5) node(17) {};
\draw (2.5,-3.5) node[ionode,label={below:$x'$}](18) {};
\draw (2.5,3.5) node[ionode,label={above:$y$}](19) {};
\draw (-2.5,-3.5) node[ionode,label={below:$y'$}](20) {};
\draw[dedge] (1) to node{\tiny EQ} (2) to node{\tiny EQ} (3) to node{\tiny EQ} (4) to node{\tiny EQ} (1);
\draw (5) to (6) to (7) to (8) to (9) to (10) to (11) to (12) to (5);
\draw[dedge] (1) to node{\tiny NE} (6);
\draw[dedge] (2) to node{\tiny EQ} (14) to node{\tiny EQ} (8);
\draw[dedge] (3) to node{\tiny NE} (10);
\draw[dedge] (4) to node{\tiny EQ} (13) to node{\tiny EQ} (12);
\draw[bend left=15] (13) to (5) to (13);
\draw[bend left=15] (14) to (9) to (14);

\draw[dedge] (7) to node{\tiny EQ} (16);
\draw[bend left=15] (8) to (16) to (8);
\draw[bend left=7] (7) to (19) to (7);
\draw[dedge] (6) to node{\tiny EQ} (15) to node{\tiny EQ} (19);
\draw[bend left=15] (5) to (15) to (5);

\draw[dedge] (11) to node{\tiny EQ} (18);
\draw[bend left=15] (12) to (18) to (12);
\draw[bend left=7] (11) to (20) to (11);
\draw[dedge] (10) to node{\tiny EQ} (17) to node{\tiny EQ} (20);
\draw[bend left=15] (9) to (17) to (9);
\end{scope}
\end{tikzpicture}
\end{center}
\end{lemma}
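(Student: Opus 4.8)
The plan is to push the color constraints of the EQ- and NE-subgadgets outward, and then to reduce the acyclic/unicyclic condition of \autoref{lem:solution} to a count of pendant double edges along the single genuine cycle of the gadget. Let $p$ be the color of the inner four-cycle, which is monochromatic because its four vertices are joined pairwise by EQ-gadgets (\autoref{lem:eqgadget}). Chasing the EQ- and NE-constraints (\autoref{lem:eqgadget}, \autoref{lem:negadget}) outward, I would show that the eight vertices of the outer cycle are forced, in cyclic order, to the colors $(s,\overline p,x,p,t,\overline p,x',p)$, where $s$ and $t$ are the colors of the two antipodal ``free'' outer vertices (those not pinned by a subgadget), and that the colors of the eight leaf vertices hanging off the outer cycle by double edges are likewise forced. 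In particular the two NE-gadgets, together with the adjoining EQ-gadgets, force the two outer vertices linked to the inner cycle---and with them the external vertices $y$ and $y'$---to color $\overline p$; this already gives $y=y'$, one of the two claimed equalities.

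Next I would use the insulating property of the subgadgets noted after \autoref{lem:eqgadget}: since the distinguished internal vertex of every EQ- and NE-gadget receives the color opposite to its two external vertices, no monochromatic component can pass through a subgadget. Hence the induced monochromatic components of the XO-gadget are exactly those of the \emph{skeleton} formed by the outer eight-cycle together with its eight pendant double edges. The only cycles in this skeleton are the eight double edges (each a $2$-cycle) and the outer eight-cycle, and the latter is never monochromatic because it already carries both colors. By \autoref{lem:solution} a coloring of the skeleton is therefore valid if and only if every maximal monochromatic arc of the outer cycle contains at most one monochromatic double edge. A quick tally shows that each free vertex always contributes exactly one monochromatic double edge, that exactly one of the two outer vertices at the $x$-end does so (according to the color of $x$), and likewise for the $x'$-end.

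The crux, which I expect to be the main obstacle, is a short case analysis forcing $x=x'$. Exploiting the $180^\circ$ rotational symmetry of the gadget, which swaps $x\leftrightarrow x'$ and $y\leftrightarrow y'$, I may assume $x=p$ and $x'=\overline p$; then the outer cycle reads $(s,\overline p,p,p,t,\overline p,\overline p,p)$ and the forced monochromatic double edges sit at the fourth and seventh outer vertices. The free vertex $t$ (the fifth) is adjacent on one side to a $p$-vertex carrying a double edge and on the other to a $\overline p$-vertex; if $t=p$ it joins the $p$-arc on its left, which then contains two double edges, and if $t=\overline p$ it joins the $\overline p$-arc on its right, which then contains two double edges. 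Either choice yields a monochromatic component with two cycles, contradicting \autoref{lem:solution}. Hence $x=x'$.

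Finally I would verify transparency: given $x=x'$ (and $y=y'=\overline p$), coloring both free vertices with $\overline x$ makes every maximal monochromatic arc of the outer cycle contain exactly one double edge, so by \autoref{lem:solution} the skeleton---and, filling in each EQ/NE-subgadget by its lemma, the whole gadget---admits an \fotwocolor ing. Since $p$ and $x$ are independent, all four combinations of colors of $(x,y)$ with $x'=x$ and $y'=y$ are realizable, which is precisely the behaviour required of a crossover. The same insulating property gives composability, since each external vertex reaches the rest of $G_\Phi$ only through an EQ-gadget and so its monochromatic component never leaves the skeleton; combined with the fact that the gadget has maximum degree $6$, this lets us eliminate all edge crossings in the construction of \autoref{thm:fo2color-delta6-anpc} and thereby obtain \autoref{thm:main}(ii).
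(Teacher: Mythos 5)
Your proof is correct, and it is considerably more explicit than the paper's, whose entire argument is a single sentence: by ``careful inspection'' along with \autoref{lem:eqgadget} and \autoref{lem:negadget} there are only $4$ \fotwocolor ings, all satisfying the lemma. Your proposal is, in effect, that inspection actually carried out and organized. The constraint chase is accurate: in cyclic order the outer cycle is forced to $(s,\overline{p},x,p,t,\overline{p},x',p)$, the two leaves hanging off each free vertex take colors $p$ and $\overline{p}$ (one of each), and $y=y'=\overline{p}$ follows at once from the NE/EQ chains. The insulation property (which the paper states explicitly in the remark following \autoref{lem:eqgadget}) legitimately reduces everything to the skeleton of the outer $8$-cycle plus its eight pendant double edges, and your criterion---every maximal monochromatic arc of the outer cycle may carry at most one monochromatic pendant double edge---is a correct application of \autoref{lem:solution}, because the outer cycle is never itself monochromatic: its NE-forced vertices have color $\overline{p}$ while its EQ-forced vertices have color $p$. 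The case analysis excluding $x=p$, $x'=\overline{p}$ checks out (whichever color the free vertex between them takes, it merges an arc that then contains two monochromatic $2$-cycles), and the $180^\circ$ rotation you invoke for the remaining case is indeed an automorphism of the gadget exchanging $x\leftrightarrow x'$ and $y\leftrightarrow y'$. Your argument even recovers the paper's count of exactly four colorings, since fixing $p$ and $x=x'$ forces both free vertices to $\overline{x}$. Beyond that, you establish two facts the lemma does not state but which the proof of \autoref{thm:main}(ii) silently needs when it claims $G'_\Phi$ ``preserves the solutions'' of $G_\Phi$: that every combination of external colors with $x=x'$ and $y=y'$ is realizable, and that the gadget composes through its insulating boundary. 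Only cosmetic quibbles remain: the inner square is joined cyclically, not pairwise, by EQ-gadgets (same conclusion), and the leaf colors are not all ``forced''---$x$ and $x'$ remain free parameters, exactly as your own case analysis treats them.
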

\begin{proof}
By careful inspection along with \autoref{lem:eqgadget} and \ref{lem:negadget}, it can be seen that there are only 4 \fotwocolor ings, all of which satisfy the lemma.
\end{proof}

\begin{proof}[Proof of \autoref{thm:main}(ii)]
Given a 3-CNF instance \( \Phi \), we apply the reduction of \autoref{thm:fo2color-delta6-anpc}. The resulting graph \( G_\Phi \) then contains at most \( 6 \left| C \right| \) edges which connect OR-gadgets to VAR\(_{n,m}\)-gadgets. Each such edge can at most cross every other such edge once, except the corresponding parallel edge. These are the only crossings in \( G_\Phi \), thus there are at most \( 18 \left| C \right|^2 - 6 \left| C \right| \) crossings, each of which is replaced by an XO-gadget resulting in \( G'_\Phi \). The graph \( G'_\Phi \) is planar and preserves the solutions of \( G_\Phi \) by \autoref{lem:xover}. Consequently, \fotwocolor\ is \textbf{NP}-complete for planar graphs with \( \Delta \geq 6 \).
\end{proof}

\section*{Acknowledgements}
We are thankful to Philip Bille, Thore Husfeldt, Konstantin Kutzkov, Rasmus Pagh, Carsten Thomassen and the reviewers for their valuable comments on an earlier draft of this paper.

\bibliographystyle{abbrv}
\bibliography{paper}

\end{document}